\def\l@subsubsection#1#2{}
\newtheorem{lemma}{Lemma}[section]
\newtheorem{theorem}[lemma]{Theorem}
\theoremstyle{definition}
\newtheorem{definition}[lemma]{Definition}
\newcommand{\CC}{{\mathbb{C}}}
\newcommand{\ZZ}{{\mathbb{Z}}}
\newcommand{\FF}{{\mathbb{F}}}
\newcommand{\cg}[1]{{\phi_\#^{({#1})}}}
\newcommand{\cgx}[1]{{\chi_\#^{({#1})}}}
\DeclareMathOperator{\im}{\mathrm{im}}
\DeclareMathOperator{\coker}{\mathrm{coker}}
\DeclareMathOperator{\ann}{\mathrm{ann}}
\newcommand{\stab}{\mathcal{S}}
\newcommand{\mm}{\mathfrak{m}}
\newcommand{\imag}{\mathrm{i}}
\newcommand{\GL}{\mathsf{GL}}
\newcommand{\Sp}{{\mathsf{Sp}^\dagger}}
\newcommand{\ESp}{{\mathsf{ESp}^\dagger}}
\begin{document}

\title{Classification of translation invariant topological Pauli stabilizer codes 
for prime dimensional qudits on two-dimensional lattices}

\author{Jeongwan Haah}
\affiliation{Microsoft Quantum and Microsoft Research, Redmond, Washington, USA}

\begin{abstract}
We prove that 
on any two-dimensional lattice of qudits of a prime dimension,
every translation invariant Pauli stabilizer group
with local generators and with code distance being the linear system size,
is decomposed by a local Clifford circuit of constant depth 
into a finite number of copies of the toric code stabilizer group (abelian discrete gauge theory).
This means that under local Clifford circuits
the number of toric code copies is the complete invariant of topological Pauli stabilizer codes.
Previously, the same conclusion was obtained
under the assumption of nonchirality for qubit codes
or the Calderbank-Shor-Steane structure for prime qudit codes;
we do not assume any of these.
\end{abstract}

\maketitle

\section{Introduction and Result}

Topological phases of matter of gapped systems refer to equivalence classes of physical systems
where two systems are deemed equivalent if they can be smoothly deformed to each other
while preserving the energy gap above the ground state subspace.
By definition, any characteristic of a topological phase of matter 
should be robust under any local perturbations.
The study of topological phases of matter thus focuses, to a large extent, 
on identifying invariants of systems under smooth deformations
and on concrete models that exhibit different invariants.
Indeed, many sophisticated model Hamilonians
are constructed to date in various dimensions in the absence or presence of symmetries~\cite{dijkgraaf1990topological,Levin_2005,WalkerWang}.
However, the classification (i.e., finding complete invariants) of topological phases 
remains still an active problem.

Here we present a classification theorem on
translation-invariant topological Pauli stabilizer code Hamiltonians in two spatial dimensions.
These are a class of Pauli stabilizer codes~\cite{CalderbankShor1996Good,Steane1996Multiple,Gottesman1996Saturating,CalderbankRainsShorEtAl1997Quantum}
whose stabilizer group generators define a local Hamiltonian
that exhibits intrinsic topological order~\cite{Wen1991SpinLiquid,Kitaev_2003}.
By construction, the Hamiltonian terms are commuting,
and all operators that determine the topological invariants
are tensor products of Pauli matrices.
Simply put, our result is that there is only the toric code (abelian discrete gauge theory)
in two dimensions.

Though Pauli stabilizer models are simple to calculate about,
some Hamiltonians in this class realize exotic phases of matter 
(e.g. fracton topological order~\cite{VHF2016})
in three or higher spatial dimensions.
This makes our conclusion in two dimensions more interesting;
later we will highlight the steps in the proof of our result 
where it is important to work in two dimensions.

Before we present our rigorous statement,
let us review previous reasoning that has alluded our result.
Believing in the effective description by unitary modular tensor categories (UMTC)~\cite{BakalovKirillov2001,Kitaev_2005},
we can tabulate possible topological phases of matter 
realized by two-dimensional topological Pauli stabilizer codes.
Indeed, if we consider generalized Pauli operators (or discrete Weyl operators)
over a system of qudits of dimension $p$, which we will define below,
the topological spins for a Pauli stabilizer code model must be valued in $p$-th roots of unity
and the UMTC is determined by a quadratic form $\theta$ over $\ZZ/p\ZZ$; 
see e.g.~\cite[\S 5]{Galindo2016} and references therein.
When $p$ is a prime (so that the topological spins are valued in a finite field $\FF_p$)
nondegenerate quadratic forms are particularly simple~\cite{Lam,Kniga}.
If we ignore direct summands of hyperbolic forms, 
which correspond to the toric code phase~\cite{Kitaev_2003},
then the only nontrivial possibilities are listed as follows.
Let $v$ denote a general vector in the domain $V$ (an $\FF_p$-vector space)
of a quadratic form $\theta: V \to \FF_p$.
Recall that an $n$-dimensional quadratic form over $\FF_p$ 
is a homogeneous polynomial in $n$ variables of total degree~2
with coefficients in $\FF_p$.

\begin{enumerate}
\item ($p=2$) a two-dimensional form $\theta_{3F}(v) = v_1^2 + v_1 v_2 + v_2^2$
which corresponds to the ``three-fermion'' theory.
\item ($p \equiv 3 \mod 4$ so the Witt group is $\ZZ_4$) 
a one-dimensional form $\theta_1(v) = v^2$,
its time-reversal conjugate $-\theta_1$, 
or a direct sum $\theta_1 \oplus \theta_1 \cong (-\theta_1) \oplus (-\theta_1)$.
\item ($p \equiv 1 \mod 4$ so the Witt group is $\ZZ_2 \oplus \ZZ_2$)
a one-dimensional form $\theta_1(v) = v^2$,
another one-dimensional form $\theta_{\alpha}(v) = \alpha v^2$ where $\alpha$ is any nonsquare element of $\FF_p$,
or their direct sum $\theta_1 \oplus \theta_\alpha$.
\end{enumerate}

\noindent
This motivates us to ask whether these candidates are exhaustive
and whether every candidate can be realized in a lattice model.
For Pauli stabilizer code models the problem becomes 
a question on locally generated abelian multiplicative groups of Pauli operators.

Previously,
Bomb\'in~\cite{Bombin2011Structure} has studied topological stabilizer groups in two dimensions 
that are translation invariant
on systems of qubits ($p=2$) and concluded that, 
up to locality preserving automorphisms of operator algebra%
\footnote{
Locality preserving automorphisms (also called quantum cellular automata)
are not necessarily shallow quantum circuits.
It is only recent results~\cite{FreedmanHastings,haah2019clifford}
that such a locality preserving automorphisms in two dimensions is actually a shallow quantum circuit.
} 
that map every Pauli matrix to a product of Pauli matrices~\cite{clifQCA,haah2019clifford},
any such group is a direct sum of copies of the toric code stabilizer group
and a trivial stabilizer group for a product state 
\emph{if} the topological charge content is \emph{nonchiral},
i.e., its decomposition does not contain the three-fermion theory.
Any chiral theory in two dimensions is widely believed to be not realizable by
any commuting Hamiltonian,
but there are at least two definitions of chirality in this statement:
one from thermal energy current~\cite[App.D]{Kitaev_2005} 
and the other from algebraic theory of anyons~\cite[cf.Prop.6.20]{FroehlichGabbiani}.
We are unaware of formal connections between these notions.

In relation to \cite{Bombin2011Structure}, 
the present author has studied~\cite{Haah2016} a similar translation invariant case with qudits of a prime dimension
under the assumption that the group be generated by tensor products of $X$ and tensor products of $Z$ 
(the Calderbank-Shor-Steane structure~\cite{CalderbankShor1996Good,Steane1996Multiple})
and concluded that every such stabilizer group is a direct sum of those for the toric code
after a geometrically local circuit of control-Not gates.

Recently a new ingredient was obtained~\cite{HFH2018},
which proves that every translation invariant topological Pauli stabilizer code on prime qudits
must have a nontrivial ``boson.''
A boson is by definition an excitation with topological spin $1 \in \CC$
(or $0 \in \ZZ/p\ZZ$ in our setting).
When topological spin corresponds to a quadratic form,
the existence of a boson means that the quadratic form~$\theta$ 
has a nonzero vector whose value is zero,
in which case the quadratic form is called \emph{isotropic}.
Hence, it rules out all the nontrivial possibilities listed above
where we have listed all anisotropic quadratic forms over prime finite fields.
This has left a question of whether the equivalence of quadratic forms of topological spins
implies the equivalence of the Pauli stabilizer groups under Clifford circuits;
the latter is stronger than just the equivalence of the corresponding topological phases of matter.
In this paper we answer this question in the affirmative,
closing the classification problem of two-dimensional topological Pauli stabilizer codes
over prime dimensional qudits up to Clifford circuits in the translation invariant case.
Let us state our result precisely.

We first recall standard definitions.
A {\bf generalized Pauli matrix} (discrete Weyl operators)
for a $p$-dimensional qudit $\CC^p$ is any product of
\begin{align}
\exp\left(\frac{2\pi \imag}{p}\right) I, \quad
X = \sum_{j \in \ZZ_p} \ket{j +1}\bra{j} \quad\text{ and }\quad 
Z = \sum_{j \in \ZZ_p} \exp\left(\frac{2\pi \imag}{p} j \right)\ket j \bra j .
\end{align}
These are defined for any integer $p \ge 1$,
but in this paper $p$ is always a prime.
A {\bf generalized Pauli operator} is any \emph{finite} tensor product of generalized Pauli matrices.
A {\bf Clifford gate} is a finitely supported unitary 
that maps every generalized Pauli operator to a generalized Pauli operator,
and a {\bf Clifford circuit} is a finite composition of layers of nonoverlapping Clifford gates
each of which is supported on a ball of a uniformly bounded radius.
The number of layers in a circuit is the {\bf depth} of the circuit.
A {\bf trivial stabilizer group} is the group of generalized Pauli operators
generated by $Z$ on every qudit.
The {\bf toric code stabilizer group}~\cite{Kitaev_2003}
is an abelian group of generalized Pauli operators
on a two-dimensional lattice $\ZZ^2$ with two $p$-dimensional qudits per lattice point,
generated by $X_{s+\hat y,1} X_{s,1}^\dagger X_{s+\hat x,2}^\dagger X_{s,2}$
and $Z_{s-\hat x,1}^\dagger Z_{s,1} Z_{s-\hat y,2}^\dagger Z_{s,2}$
for all sites $s$.

\begin{theorem}\label{thm:main}
Let $\stab$ be an abelian group of generalized Pauli operators acting on a two-dimensional square lattice $\ZZ^2$
with $q \ge 1$ qudits of a prime dimension $p$ per lattice point. Suppose that
\begin{enumerate}
\item (No frustration) if $\omega I \in \stab$ for $\omega \in \CC$, then $\omega = 1$,
\item (Translation invariance up to a phase) if $P \in \stab$, then for every translate $P'$ of $P$, 
we have $\omega P' \in \stab$ for some $p$-th root of unity $\omega \in \CC$, and
\item (Topological order) if a generalized Pauli operator $P$ commutes with every operator of $\stab$, then $\omega P \in \stab$ for some $\omega \in \CC$.
\end{enumerate}
Then, there exists a Clifford circuit of finite depth that maps $\stab$ into
a direct sum $\mathcal T^{\oplus n} \oplus \mathcal Z$ for some $n \ge 0$
where $\mathcal T$ is the toric code stabilizer group,
and $\mathcal Z$ is the trivial stabilizer group.
Here the circuit is translation invariant
with respect to a subgroup of the full translation group $\ZZ^2$ with finite index.
\end{theorem}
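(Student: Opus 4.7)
\medskip

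\noindent\textbf{Proof plan.}
The plan is to work in the Laurent polynomial formalism that encodes translation invariant Pauli stabilizer groups on $\ZZ^2$ as symplectic modules over $R = \FF_p[x^{\pm 1}, y^{\pm 1}]$. Each generator of $\stab$ (up to translation) becomes a column of a matrix $\sigma$ with $2q$ rows, paired by the standard symplectic form $\lambda = \begin{pmatrix} 0 & I_q \\ -I_q & 0 \end{pmatrix}$ (with the antipode $\bar x = x^{-1}$). The three hypotheses translate respectively to $\sigma^\dagger \lambda \sigma = 0$ (commutativity), the column space $\im\sigma$ being a finitely generated $R$-module (translation invariance and local support), and the column space being its own symplectic dual modulo center (topological order). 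The centerpiece of the proof is to show that, Clifford-circuit by Clifford-circuit, one can isolate standard hyperbolic pairs out of $\sigma$ until only a trivial piece of the module remains.

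First, I would invoke the main result of \cite{HFH2018} to produce, locally, a \emph{boson}: a Pauli string $b$ that commutes with all of $\stab$, is not itself in $\stab$ (up to phase), has trivial topological self-spin, and has finite support. In the polynomial picture, $b$ corresponds to a column vector $\beta$ lying in the symplectic dual of $\im\sigma$ but not in $\im\sigma$ itself, and the self-spin condition becomes an algebraic condition that $\beta$ can be written in a form with zero quadratic refinement at the origin. The topological-order hypothesis applied to the enlarged group $\stab \cup \{b\}$ then guarantees the existence of a \emph{partner} $a$, a Pauli string that commutes with $\stab$ but anticommutes with a translate of $b$; algebraically, a vector $\alpha$ pairing to $1$ with a shift of $\beta$ under the symplectic form. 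The pair $(\beta, \alpha)$ must be rendered into the \emph{standard} toric-code pair by a sequence of symplectic row and column operations realizable by elementary Clifford moves (controlled-$X$ and $Z$ conjugations, Hadamards, local phases) composed on bounded-radius supports.

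Second, I would promote this algebraic reduction to a \emph{constant-depth} Clifford circuit. The key input is that the polynomial entries of $\alpha$ and $\beta$, being finitely supported, can be approximated by local Clifford operations that, iterated a uniformly bounded number of times, diagonalize the symplectic pair into the standard form $X_{s+\hat y,1} X_{s,1}^\dagger X_{s+\hat x,2}^\dagger X_{s,2}$ and $Z_{s-\hat x,1}^\dagger Z_{s,1} Z_{s-\hat y,2}^\dagger Z_{s,2}$ on a two-qudit subspace of each enlarged unit cell. Once one toric-code layer $\mathcal T$ is split off, the remaining stabilizer $\stab'$ again satisfies all three hypotheses, but with strictly smaller logical dimension per unit cell as measured by the rank of an appropriate quotient module; I would iterate until no boson remains.

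Third, I would show that the leftover stabilizer group, call it $\stab_0$, has no nontrivial bosonic excitations and, by the contrapositive of \cite{HFH2018}, encodes zero logical qudits on every finite periodic lattice; this is the piece that will become $\mathcal Z$. Here a Clifford \emph{circuit} may not suffice, but a Clifford QCA does: one builds the locality-preserving automorphism carrying the product-state trivial group to $\stab_0$ by using the completeness of $\stab_0$ in the operator algebra together with a disentangling procedure in the polynomial ring that preserves local support but need not be bounded in depth. I expect the main obstacle to be this last step: producing the Clifford QCA rigorously in the translation-invariant setting. The difficulty is that, while the algebraic equivalence of $\stab_0$ with the trivial group follows from the module being a hyperbolic/Lagrangian object, implementing the equivalence as an \emph{operator algebra automorphism} with uniformly bounded support radius (as opposed to a finite-depth circuit) requires careful handling of the symplectic basis change over $R$, which may not factor through invertible elementary matrices. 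All other steps are standard symplectic linear algebra over $R$ combined with the boson input of \cite{HFH2018}.
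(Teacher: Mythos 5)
Your overall skeleton matches the paper's strategy: transcribe the stabilizer group to a matrix $\sigma$ over $R = \FF_p[x^\pm,y^\pm]$, invoke the boson-existence result of \cite{HFH2018}, iteratively peel off toric-code summands while the anyon content is nontrivial, and finish by citing \cite{HFH2018} for a Clifford QCA that trivializes the anyon-free remainder. The induction measure (a rank of a quotient module decreasing by two per step) is also the right idea; the paper uses $\dim_{\FF_p}$ of the torsion part of $\coker\sigma^\dagger\lambda_q$.

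However, the central step --- how the boson produces a toric-code summand --- rests on a misidentification that would make the argument collapse. You describe the boson as ``a Pauli string $b$ that commutes with all of $\stab$, is not itself in $\stab$ (up to phase), has trivial topological self-spin, and has finite support,'' i.e.\ a finitely supported vector $\beta$ lying in the symplectic dual of $\im\sigma$ but not in $\im\sigma$. Under the topological-order hypothesis this object does not exist: the hypothesis is precisely $\ker\sigma^\dagger\lambda_q = \im\sigma$, so every finitely supported Pauli commuting with $\stab$ is already in $\stab$. Consequently the ``partner $\alpha$'' you extract by enlarging $\stab$ by $b$ is vacuous. In the paper, the boson is not a commuting Pauli operator at all; it is a nontrivial \emph{excitation} class $e \in \coker\epsilon$ with vanishing topological spin $\theta(e)$, where $\theta$ is defined via commutation relations among three hopping (mover) operators $p_x(e), p_y(e)$, which themselves do \emph{not} commute with $\stab$ (they create excitations). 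The extraction of the toric code then proceeds by (i) using $\theta(e)=0$ to choose movers that commute with all of their own translates (\cref{lem:isotropy-commutingmover}), (ii) simplifying the movers to standard unit vectors by $\ESp$ operations after coarse-graining (\cref{lem:commutingmover=trivialmover}), and (iii) reading off from the exactness $\ker\epsilon = \im\lambda_q\epsilon^\dagger$ that $\epsilon$ must contain a toric-code block $\epsilon_0$ as a direct summand, after a further elementary reduction (\cref{lem:xyab}). None of this is a direct symplectic diagonalization of a pair of commuting Pauli operators; the role of the commutativity hypothesis and of exactness is different from what your plan assumes. Additionally, you flag the Clifford QCA for the anyon-free remainder as the ``main obstacle,'' but the paper treats this as a solved ingredient cited directly from \cite{HFH2018} (the case $k=0$); the new work is precisely the mover/spin manipulation you have glossed over. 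A related point: because the $\ESp$ generators are by definition constant-depth Clifford circuits, the constant-depth claim is automatic once the reduction is expressed entirely in $\ESp$, rather than requiring a separate ``approximation by local Clifford operations'' argument.
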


Physically, the stabilizer group as a whole is more important than a generating set
since any local generating set can be used to define a gapped Hamiltonian
but the quantum phase of matter only depends on the group~\cite[\S 2]{Haah2013}.
This is why we have stated our theorem in terms of groups.
Our result should be understood as the scope of the topological phases
that can be realized by unfrustrated commuting Pauli Hamiltonians.

One might wonder why there is no reference to the length scale of the topological order.
This is because the definition of topological order in \cite{BravyiHastingsMichalakis2010stability}
is for a family of finite lattices,
whereas we work with infinite lattices.
Our assumption implies the topological order condition of \cite{BravyiHastingsMichalakis2010stability}
with length scale being the linear system size 
under periodic boundary conditions~\cite{Haah2013}.
Any implicit finite number in the theorem such as the depth of the Clifford circuit
and the index of the translation subgroup
gives a constant uniform bound on a corresponding quantity
for an infinite family of finite systems.

Note that our theorem assumes a finite dimensional degrees of freedom per site.
This is not just a technical convenience but rather a fundamentally important assumption.
Indeed, in a limit $p \to \infty$ that would produce a rotor $U(1)$ model,
the basic result that any excitation is attached to a string operator and hence is mobile,
cease to be true in general; consider 2+1D analogues of \cite{Pretko2016}.
Moreover, when it comes to rotor models 
we do not have any stability result against perturbations such as \cite{BravyiHastingsMichalakis2010stability}
but only an instability result~\cite{Polyakov1975}.

It remains an open problem to relax the translation invariance.
One might be able to promote an arbitrary system to a periodic system~\cite{Hastings2013},
but it appears difficult to adapt the present classification proof directly
to nonperiodic situations;
the argument for the existence of a boson in \cite{HFH2018} 
relies on bilinear forms over a field of fractions in one variable,
for which the translation invariance is used.

Even assuming translation invariance,
there is a quantitative question left.
In our mapping from a given stabilizer group to a direct sum of toric code stabilizer groups,
we had to break the translation invariance down to a smaller group.
To some extent this is necessary;
a translation group may act nontrivially on the fusion group of anyons~\cite{Wen2003Plaquette},
while for the toric code the action is trivial.
However, almostly surely, our choice of smaller translation group is not the biggest possible;
in fact, we will not keep track of the index of this translation subgroup.
One can then ask what the precise order (exponent) 
of the translation group action on the fusion group of anyons is.
Generally this question should be answered as a function of 
interaction range and unit cell size.
A lower bound that is exponential in the interaction range is known~\cite[\S 7 Rem.~3]{Haah2013},
but the upper bound is largely open.

The rest of this paper constitutes the proof of \cref{thm:main}.

\section{Transcription to polynomials}

Following \cite{Haah2013} (see also \cite{Haah2016} and a summary section \cite[\S IV.A]{HFH2018})
we transcribe the problem into a polynomial framework
by regarding translation invariant groups of generalized Pauli operators modulo phase factors,
which are abelian, 
as modules over the translation group algebra
\begin{align}
R = \FF_p[x^\pm, y^\pm].
\end{align}
In particular, the abelianized group of generalized Pauli operators is a free module $R^{2q}$
where $q$ is the number of qudits per lattice site (unit cell),
equipped with a nondegenerate symplectic form that captures commutation relations.
Below we will not distinguish generalized Pauli operators from an element of $R^{2q}$
as the phase factors will not be important.

Already this perspective implies the following:
Note that in \cref{thm:main} we do not assume that the stabilizer group is generated
by operators whose supports are contained in disks of a uniformly bounded radius;
that is, we do not assume that the stabilizer group has a local generating set.
However, the supposition of \cref{thm:main} implies
that the stabilizer group has a local generating set.
The group of all generalized Pauli operators (each of which is finitely supported)
up to phase factors is $R^{2q}$, which is a finitely generated module over $R$.
The stabilizer group modulo phase factors is a subset of $R^{2q}$
but the translation invariance implies that this subset is actually a submodule.
Since $R$ is Noetherian, the stabilizer submodule must be finitely generated,
which means that the stabilizer group does have local generators
with uniformly bounded support size,
where the bound is given by a finite generating set for the stabilizer submodule.

We use $\FF_p$-linear ring homomorphisms $\phi^{(m)} : \FF_p[x'^\pm, y'^\pm] \to R$ 
such that $x' \mapsto x^m$ and $y' \mapsto y^m$
indexed by positive integer $m$
to denote {\bf coarse-graining},
which induces formally a covariant functor $\cg{n}$ on the category of $R$-modules.
The domain of this morphism is interpreted as the group algebra for a smaller translation group,
enlarging the unit cell of the qudit system $m \times m$ times as large as the original one.
We use $\overline \cdots$ to denote the $\FF_p$-linear {\bf involution} of $R$ 
such that $x \mapsto \bar x = x^{-1}$ and $y \mapsto \bar y = y^{-1}$,
and $\dagger$ the involution followed by transpose for matrices over $R$.
Let $I_q$ denote the $q \times q$ identity matrix.
We fix specific matrices over $R$:
\begin{align}
\lambda_q = \begin{pmatrix} 0 & I_q \\ - I_q & 0 \end{pmatrix}, \quad
\epsilon_0 = \begin{pmatrix}
x-1 & y-1 & 0 & 0 \\
0   & 0   & \bar y -1 & -\bar x + 1
\end{pmatrix} \quad \text{ and }\quad
\sigma_0 = (\epsilon_0 \lambda_q^{-1})^\dagger
 \label{eq:eps0}
\end{align}
where $\epsilon_0$ describes the $\ZZ_p$ toric code~\cite{Kitaev_2003} 
on the square lattice~\cite[\S5 Ex.~2]{Haah2013}.
For the clarity in notation we define a matrix $E_{i,j}(a)$ for any $a \in R$ as
\begin{align}
 \left[ E_{i,j}(a) \right]_{\mu \nu} &= \delta_{\mu \nu} + \delta_{\mu i} \delta_{\nu j} a 
 & \text{ where $\delta$ is the Kronecker delta}.
\end{align}

\begin{definition}\label{def:ESp}
For a given positive integer $q$,
the following $2q \times 2q$ matrices generate the 
{\bf elementary symplectic group} denoted by $\ESp(q;R)$.
Below $a \in R^\times$ is any monomial.
\begin{align}
 \text{Hadamard: } \quad&E_{i,i+q}(-1) E_{i+q,i}(1) E_{i,i+q}(-1) &\text{ where } 1 \le i \le q,\nonumber\\
 \text{control-Phase: }\quad & E_{i+q,j}(a)E_{j+q,i}(\bar a)&\text{ where } 1 \le i,j \le q,\nonumber \\
 \text{control-Not: } \quad & E_{i,j}(a) E_{j+q,i+q}(-\bar a) &\text{ where }1 \le i \ne j \le q, \nonumber \\
\text{extra gate: } \quad & E_{i,i}(a-1) E_{i+q,i+q}(a^{-1}-1) & \text{ where } 1 \le i \le q. \nonumber
\end{align}
The {\bf symplectic group} denoted by $\Sp(q;R)$ consists of 
all $U \in \mathrm{Mat}(2q;R)$ such that $U^\dagger \lambda_q U = \lambda_q$.%
\footnote{
The $\dagger$ in the notation $(\mathsf{E})\Sp$ refers to the fact that
we collect matrices $U$ that obeys $U^\dagger \lambda_q U = \lambda_q$.
Sometimes a different group consisting of $U$ such that $U^T \lambda_q U = \lambda_q$
is considered in literature, and our notation differentiate them.
}
\end{definition}
Note that the control-Phase%
\footnote{
This includes the induced action by the ``phase'' gate 
$\mathrm{diag}(1,\imag)$ but not $\mathrm{diag}(1,1,1,\imag)$ for qubits ($p=2$).
Hence, this terminology may be inconsistent with that elsewhere.
}
gate with $i = j$ is equivalent to $E_{i+q,i}(f)$ for some $f = \bar f$,
and conversely any $E_{i+q,i}(f)$ with $f = \bar f$ can be written as the control-Phase with $i=j$
since any such $f$ is of form $a + \bar a$ for some nonunique $a$.
The extra gate is the identity if $p=2$.

The elementary symplectic group is a subgroup of the symplectic group as one can verify,
but they are not equal if $R$ was a Laurent polynomial ring of three or more variables~\cite{haah2019clifford}.
If~$R = \FF_p[x^\pm, y^\pm]$, the Laurent polynomial ring of \emph{two} variables,
then for every $U \in \Sp(q;R)$ there exists a positive integer $m$ such that $\cg{m}(U) \in \ESp(m^2 q;R)$~\cite{haah2019clifford}.
That is, every symplectic transformation on two-dimensional lattice is a Clifford circuit with weaker translation invariance.
Hence, in the present paper where we only have two-dimensional lattices,
we do not have to distinguish $\ESp$ and $\Sp$.

\begin{theorem}[Implying \cref{thm:main}]\label{thm:poly}
Let $\sigma$ be a $2q \times t$ matrix over $R$,
interpreted as a map acting on the left of a column vector of length $t$,
such that
\begin{align}
\ker \sigma^\dagger \lambda_q  = \im \sigma . \label{eq:exact}
\end{align}
Then, there exist an integer $m \ge 1$ and a matrix $E \in \ESp(m^2q;R)$ such that
\begin{align}
E \im \cg{m}(\sigma) = \left(\im \begin{pmatrix} I_{m^2q - k} \\ 0 \end{pmatrix} \right) \oplus \bigoplus^{k/2} \im \sigma_0 
\end{align}
where the images are over $R' = \FF_p[x^{\pm m}, y^{\pm m}] \subseteq R$
and $k$ is an even integer that is equal to the $\FF_p$-dimension of the torsion submodule of $\coker \sigma^\dagger$.
\end{theorem}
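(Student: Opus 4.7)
The plan is to work entirely in the polynomial framework and build the decomposition by inductively extracting one copy of $\sigma_0$ at a time, until the cokernel of $\sigma^\dagger$ becomes torsion-free. The critical input is the theorem of \cite{HFH2018}, which guarantees that whenever the torsion of $\coker \sigma^\dagger$ is nonzero, the stabilizer module admits a nontrivial \emph{boson}: an excitation whose self-braiding phase is trivial. With such a boson in hand, the strategy is to pair it with a conjugate anyon, realize the pair as a $\ZZ_p$ toric code, and split it off via an elementary symplectic transformation after a suitable coarse-graining.

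For the inductive step I would first invoke \cite{HFH2018} to produce a boson $b \in R^{2q}/\im\sigma$, then coarse-grain by $\cg{m}$ for $m$ large enough that $b$ and the short string operator witnessing its trivial self-statistics both fit inside a single enlarged unit cell. Exactness \eqref{eq:exact} together with nondegeneracy of $\lambda_q$ then produces a companion anyon $c$ whose mutual braiding with $b$ is a primitive $p$-th root of unity, again representable by a short operator after possibly a further refinement of $m$. The pair $(b, c)$ is exactly the anyon data of a $\ZZ_p$ toric code, and I would use the Hadamard, control-Not, control-Phase, and extra-gate generators of $\ESp$ listed in \cref{def:ESp} to transport $(b, c)$ onto two designated qudits in the canonical form $\sigma_0$, and then eliminate those two qudits from all remaining generators of $\cg{m}(\sigma)$. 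The outcome is a splitting $E \im\cg{m}(\sigma) = \im\sigma_0 \oplus \im\sigma'$ with $\sigma'$ still satisfying exactness and with $\coker(\sigma')^\dagger$ having torsion $\FF_p$-dimension reduced by exactly $2$.

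Iterating $n$ times reduces to the case when $\coker\sigma^\dagger$ is torsion-free. In this terminal case exactness forces $\im\sigma$ to be a Lagrangian direct summand of the free symplectic module over $R$: it is isotropic, equal to its own $\lambda_q$-orthogonal, and a direct summand by a standard projectivity argument relying on the structure of $R$ as a Laurent polynomial ring over a field. The theorem then reduces to the transitivity of $\Sp$ on the set of Lagrangian direct summands of the free symplectic module, which yields the desired $U$ with $\im\sigma = U \cdot \im \begin{pmatrix} I \\ 0 \end{pmatrix}$.

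The main obstacle I anticipate lies in the inductive step. First, one must verify that condensing a single boson strictly reduces the torsion dimension by exactly $2$ and not more; this amounts to showing that $b$ and $c$ generate a rank-$2$ summand of the torsion module, and relies crucially on the primality of $p$ so that the torsion forms a vector space over $\FF_p$ with nondegenerate $\FF_p$-valued bilinear pairing. Second, the transport of $(b, c)$ onto $\sigma_0$ must be realized inside $\ESp$, not merely inside $\Sp$, which requires checking that the residual self-braiding phase of $b$ can always be trivialized by a control-Phase with $i = j$ together with the extra gate --- in particular the qubit case $p = 2$ behaves qualitatively differently and requires separate treatment. Finally, a housekeeping but nontrivial point is that all $n$ bosons produced across the induction must be made short simultaneously in a common coarse-graining $\cg{m}$, obliging one to take $m$ large a priori rather than recoarse-graining at each step.
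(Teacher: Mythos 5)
Your high-level strategy matches the paper's: induct on the $\FF_p$-dimension of the torsion of $\coker\epsilon$ (where $\epsilon = \sigma^\dagger\lambda_q$), invoke \cite[Cor.~III.20]{HFH2018} to produce a boson at each step, coarse-grain, split off a $\sigma_0$, and at the end handle the torsion-free case by \cite[Thm.~IV.4]{HFH2018}. But the central mechanism of the inductive step is missing, and some of your anticipated obstacles are non-issues while a real one is overlooked.

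The step ``transport $(b,c)$ onto two designated qudits in the canonical form $\sigma_0$'' is the entire technical content, and it does not proceed via a companion anyon $c$ that you construct a priori. The paper instead works with the $x$- and $y$-\emph{movers} (hopping operators) of the boson $e$. The hypothesis $\theta(e)=0$ is used exactly once, in \cref{lem:isotropy-commutingmover}, to choose movers that mutually commute with all their translates after coarse-graining. Then \cref{lem:commutingmover=trivialmover} simplifies $p_x$ to a unit vector using that $p_x$ lives over the PID $\FF_p[x^\pm]$ and — crucially — that $e$ is \emph{nontrivial}, which via \cref{lem:goodbasis} forbids $p_x$ from reducing to a non-unit monomial or from failing to normalize $p_y$. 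Once the movers are unit vectors, the first two columns of $\epsilon$ are forced to be $(x-1,0,\dots,0)^T$ and $(y-1,0,\dots,0)^T$, and the ``companion anyon'' you want is not something you find separately: the exactness condition $\ker\epsilon = \im\lambda_q\epsilon^\dagger$ forces the row $(0,\dots,0,\bar y-1,-\bar x+1,0,\dots,0)$ into the row span of $\epsilon$. Adjoining this as a new generator, doing row/column cleanup, and invoking \cref{lem:xyab} splits off $\epsilon_0$. This is also where $k=1$ is ruled out (the forced extra row cannot involve the first row). Your sketch of ``eliminate those two qudits from all remaining generators'' silently assumes this whole chain of lemmas.

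Two of your anticipated difficulties are not actually difficulties: $p=2$ requires no separate treatment (the extra gate is just the identity there), and one does not need a single a-priori $m$ — coarse-grainings compose, so recoarse-graining at each of the $n$ steps is fine. Conversely, in the base case your claim that exactness forces $\im\sigma$ to be a Lagrangian direct summand and then ``transitivity of $\Sp$ on Lagrangian direct summands'' finishes it is correct in outline but is exactly the nontrivial content of \cite[Thm.~IV.4]{HFH2018}; it is not a ``standard projectivity argument.'' The paper first shows (via \cite[Lem.~7.1]{Haah2013} and \cite[Cor.~4.2]{Haah2013}) that the code encodes zero logical qudits on any finite torus, and only then invokes the explicit Clifford QCA construction of \cite{HFH2018}. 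You should cite that result rather than present transitivity as known.
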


\begin{proof}[Proof of the transcription]
Any translation invariant abelian group of generalized Pauli operators
corresponds to a submodule $S$ of the $R$-module $P$ 
of all generalized Pauli operators (forgetting phase factors) with the property that 
$v^\dagger \lambda_q v = 0$ for any $v \in S$~\cite[Prop.~1.2]{Haah2013}.
Picking a generating set for the module $S$,
which amounts to writing $S = \im \sigma$ for some matrix $\sigma$ over $R$,
we have a matrix equation $\sigma^\dagger \lambda_q \sigma = 0$.
The topological order condition~\cite{BravyiHastingsMichalakis2010stability}
for unfrustrated commuting Pauli Hamiltonians
is equivalent to $\ker \sigma^\dagger \lambda_q = \im \sigma$~\cite[Lem.~3.1]{Haah2013}.
An elementary symplectic trasformation 
is always induced by a Clifford circuit of finite depth~\cite[\S2.1]{Haah2013}.
These symplectic transformations are forgetful only of a conjugation by a 
(possibly infinitely supported) generalized Pauli operator~\cite[Prop.~2.2]{Haah2013}.
The module $\im \sigma_0$ is the same as the toric code stabilizer group.
\end{proof}

\section{Proof}\label{sec:proof}

The proof of \cref{thm:poly} is by induction 
in the vector space dimension $k$ of the torsion part of $\coker{\epsilon}$
where $\epsilon = \sigma^\dagger \lambda_q$ is the excitation map of the given stabilizer group.
It has been proved that the excitation map $\epsilon$ can be chosen 
(as it depends on a generating set for the stabilizer group)
such that $\coker{\epsilon}$ is a torsion module
if $R$ is the Laurent polynomial ring with two variables,
and furthermore after a suitable choice of a smaller translation group (coarse-graining)
the annihilator of $\coker \epsilon$ becomes a maximal ideal 
$\mm = (x-1,y-1) \subset R$~\cite[\S 7 Thm.~4]{Haah2013}.
Such $\epsilon$ must be $q \times 2q$ to satisfy the topological order condition
$\ker \epsilon = \im \lambda_q \epsilon^\dagger$.
Hence, without loss of generality we begin with an extra assumption,
whenever $\coker \epsilon \neq 0$, that
\begin{align}
\ann \coker \epsilon = \mm = (x-1,y-1). \label{eq:annmm}
\end{align}
This implies that $k$ is finite.
We will extract a copy of the toric code whenever $k \ge 2$,
decreasing this dimension $k$ by 2.
This is done by trivializing string operators that transports bosons
whose existence is proved in \cite[Cor.~III.20]{HFH2018}.
The possibility $k=1$ will be ruled out in the course of the proof.
The case $k=0$ is treated in \cite[\S IV.B]{HFH2018}
where a Clifford quantum cellular automaton is constructed 
to map the stabilizer group to the trivial stabilizer group.
But it is shown that any Clifford QCA in two dimensions 
is in fact a Clifford circuit with weaker translation invariance 
up to a shift~\cite{haah2019clifford}.
A shift QCA does not change the stabilizer group at all.
These will complete the proof.

\subsection{Topological spin}

\begin{figure}
\centering
\includegraphics[width=0.8\textwidth, trim = {0ex 75ex 95ex 0ex}, clip]{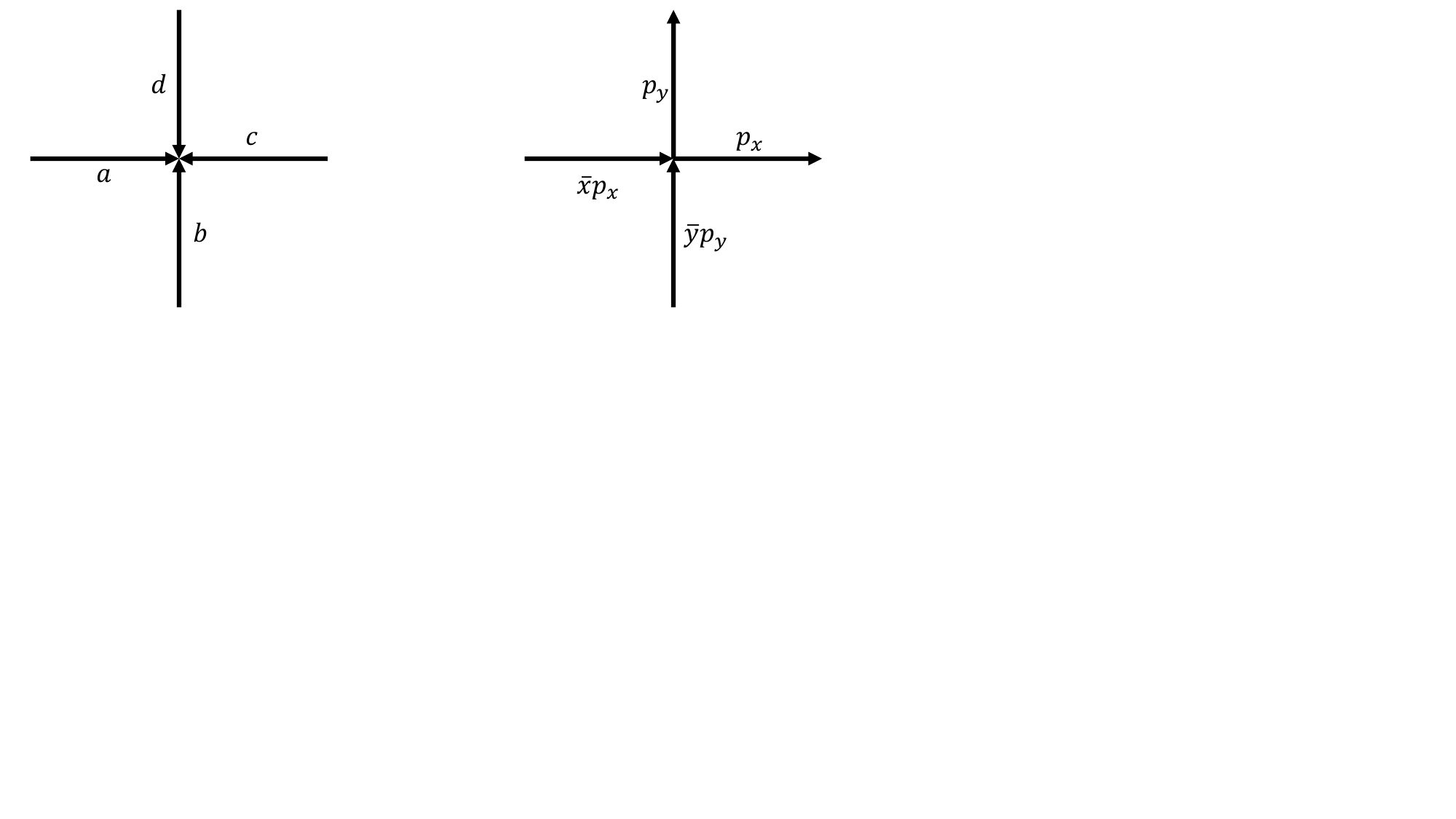}
\caption{
Left: 
Movers (hopping operators) that send a topological charge $e$ to the center.
Commutation relations of \emph{any} triple of the movers determine the topological spin $\theta$ of $e$;
\cref{lem:thetawelldefined}.
Right:
For a boson ($\theta = 0$) we can choose movers $p_x,p_y$ on a coarse scale
such that they always commute;
\cref{lem:isotropy-commutingmover}.
}
\label{fig:topspin}
\end{figure}

We consider the topological spins of the excitations of a given topological stabilizer group.
We will define them using hopping operators for the excitations 
in a way that is tailored to our setting.
Recall that two generalized Pauli operators represented as two column vectors $u,v$ over $R$,
commute with each other if and only if
\begin{align}
[u,v] = - [v,u] \in \FF_p
\end{align}
vanishes,
where $[u,v]$ is
the coefficient of $x^0 y^0 = 1 \in R$ in $u^\dagger \lambda_q v$~\cite[Prop.1.2]{Haah2013}.
Let us introduce a notation for any  $a,b,c \in R^{2q}$
\begin{align}
[a,b,c] &= [a, b] + [b, c] + [c, a] .
\end{align}
It is obvious that $[a,b,c] = [b,c,a] = [c,a,b]$.

\begin{definition}\label{def:tspin}
Assuming \cref{eq:annmm},
an {\bf excitation} is any element in the codomain of $\epsilon$
or equivalently any element in the domain of $\sigma$.
An excitation $e$ is {\bf nontrivial} if $e \notin \im \epsilon$.
A {\bf (topological) charge} is then an equivalence class of excitations modulo trivial ones.
An {\bf $x$-mover} $p_x(e)$ for an excitation $e$
is any generalized Pauli operator $p \in R^{2q}$ such that $\epsilon( p ) = (x-1)e$.
Likewise, a {\bf $y$-mover} $p_y(e)$ is any $p \in R^{2q}$ such that $\epsilon (p) = (y-1)e$.
The {\bf topological spin} $\theta(e) \in \FF_p$ of an excitation is
\begin{align}
\theta(e) &= \lim_{n \to +\infty} [a_n,b_n,c_n] \label{eq:theta}\\
\text{where} \quad
a_n &= (x^{-n} + x^{-n+1} + \cdots + x^{-1})p_x(e), \nonumber\\
b_n &= (y^{-n} + y^{-n+1} + \cdots + y^{-1})p_y(e), \quad \text{ and } \nonumber\\
c_n &= -(1 + x + x^2 + \cdots + x^{n-1}) p_x(e). \nonumber
\end{align}
\end{definition}

The topological spin can and should be defined much more generally,
but we use this narrow definition which is sufficient for this paper.
The existence of movers for any excitation is precisely the content of \cref{eq:annmm}.
Note that movers are not unique 
since any element of $\ker \epsilon = \im \sigma$ can be added to them.
Our definition of topological spin is due to \cite{LevinWen2003Fermions}
using commutation relations among three hopping operators (movers)
attached to an excitation; we have transcribed it into our additive notation.
The limit as $n \to \infty$ exists because $a_n,b_n,c_n$ are 
long ``string'' operators and string segments far away 
from the origin of the lattice always commute
and hence the sequence in~$n$ becomes constant eventually.
Pictorially, $a_n$ is inserting $e$ from the left infinity to the origin,
$b_n$ from the bottom, and $c_n$ from the right.
See \cref{fig:topspin}.

\begin{lemma}\label{lem:thetawelldefined}
First, $\theta(e)$ is independent of the choices of the movers.
Second, $\theta(e') = \theta(e)$ if $e'-e$ is trivial.
Third, $\theta(e) = \theta(\cg{m}(e))$ for any positive integer $m$.
Fourth, $\theta(e) = [a_n,b_n,d_n] = [a_n,c_n,d_n] = [b_n,c_n,d_n]$ as $n \to \infty$
where
\begin{align}
d_n &= -(1+ y + y^2 + \cdots +y^{n-1})p_y(e).
\end{align}
\end{lemma}
The fourth claim is basically that the triple commutator can consist of
any string operators as long as they circle around the excitation~$e$ counterclockwise.
\begin{proof}
First, 
the movers are unique up to $\ker \epsilon = \im \sigma$,
and hence $\tilde p_{x,y}(e) = p_{x,y}(e) + s_{x,y}$ for some $s_x,s_y \in \im \sigma$ 
is the most general form of movers.
Let us examine
\begin{align}
\tilde \theta(e) - \theta(e) = 
\lim_{n\to \infty}
[\tilde a_n - a_n, b_n - \tilde c_n] + 
[\tilde b_n - b_n, c_n - \tilde a_n] + 
[\tilde c_n - c_n, a_n - \tilde b_n]
\end{align}
where those with tilde are by the modified movers $\tilde p_{x,y}(e)$,
and show that each term vanishes on its own.
The first term is $[(x^{-n} + x^{-n+1} + \cdots + x^{-1})s, b_n - \tilde c_n]$ 
for some sufficiently large $n$.
The difference $b_n - \tilde c_n$ creates excitations near $(0,-n) \in \ZZ^2$ and $(n,0) \in \ZZ^2$ where
$(x^{-n} + x^{-n+1} + \cdots + x^{-1})s$ does not have any support,
and hence the commutator vanishes.
The other two terms vanish for completely parallel reasons.

Second,
let $g \in R^{2q}$ be a finitely supported operator such that $\epsilon(g) = e' - e$.
By the first claim we can choose movers for $e'$ as $p_x(e') = p_x(e) + (x-1)g$ 
and $p_y(e') = p_y(e) + (y-1)g$.
Then,
$a_n(e') - a_n(e) = (1-x^{-n})g$,
$b_n(e') - b_n(e) = (1-y^{-n})g$,
and $c_n(e') - c_n(e) = (1-x^{n+1})g$.
Now we see
\begin{align}
\theta(e') - \theta(e) 
&=
[g-x^{-n}g,b_n] + [a_n, g - y^{-n}g] + [g-x^{-n}g,g - y^{-n}g]\nonumber\\
&\quad+
[g - y^{-n}g,c_n] + [b_n,g - x^{n+1}g] + [g - y^{-n}g,g - x^{n+1}g]\nonumber\\
&\quad+
[g - x^{n+1}g,a_n] + [c_n,g-x^{-n}g] + [g - x^{n+1}g,g-x^{-n}g]\nonumber\\
&=
[g, b_n] + [a_n, g] + [g,g]\nonumber\\
&\quad +
[g,c_n] + [b_n,g] + [g,g]\\
&\quad +
[g,a_n] + [c_n,g] + [g,g]\nonumber\\
&= 0\nonumber
\end{align}
where 
the first equality is by linearity of $[\cdot,\cdot]$,
the second equality is because operators of disjoint support commute 
(e.g. $[x^{-n}g,b_n]=0$ for all sufficiently large $n$),
and the third equality is because $[g,g]=0$ and $[u,v]=-[v,u]$ for any $u,v\in R^{2q}$.

Third, if we have movers $p_x,p_y$ for $e$,
we can choose movers $p_{x'}, p_{y'}$ for $\cg{m}(e)$
as $p_{x'} = \cg{m}((1+x+\cdots+ x^{m-1})p_x)$ 
and $p_{y'} = \cg{m}((1+y+\cdots+y^{m-1})p_y)$.
The commutator is oblivious to $\cg{m}$ 
and the new movers are just longer (larger $n$) string operators.

Fourth, we examine the difference, omitting the subscript $n$ for brevity:
\begin{align}
[a,b,c] - [a,b,d]
&=
[a,b]+[b,c]+[c,a] - [a,b]-[b,d]-[d,a] \nonumber \\
&= [b-a,c-d] .
\end{align}
The operator $b-a$ creates excitations at $(0,-n),(-n,0) \in \ZZ^2$
while $c-d$ is supported around the positive $x$ and $y$ axes.
Hence, we can find a straight diagonal string operator $f$ connecting $(0,-n)$ and $(-n,0)$
such that $[f,c-d] =0$ (for their disjoint support) and $\epsilon(b-a+f) = 0$.
Since $\ker \epsilon = \im \sigma$,
we know $b-a+f \in \im \sigma$.
Similarly, we can find a straight diagonal string operator $h$ 
connecting $(n,0) \in \ZZ^2$ and $(0,n) \in \ZZ^2$
such that $c-d+h \in \ker \epsilon = \im \sigma$ and $[b-a+f,h ] = 0$.
This means that $[b-a,c-d] = [b-a+f,c-d+h]$, but the latter is zero 
since $[u,v]=0$ for any $u,v \in \im \sigma$.
This proves that $\theta(e) = \lim_{n \to\infty}[a_n,b_n,c_n] = \lim_{n \to \infty} [a_n,b_n,d_n]$.
Similarly,
\begin{align}
[a,b,d] - [a,c,d] 
&= [a,b]+[b,d]+[d,a] - [a,c] - [c,d] - [d,a]\nonumber\\
&= [a-d,b-c]
\end{align}
The operator $a-d$ creates excitations at $(-n,0),(0,n) \in \ZZ^2$
and the operator $b-c$ creates excitations at $(0,-n), (n,0) \in \ZZ^2$,
which can be turned into elements of $\im \sigma$ 
by ``capping off'' by straight diagonal string operators
and the commutator $[a-d,b-c]$ is shown to vanish.
A parallel argument to shows that $[a_n,c_n,d_n] = [b_n,c_n,d_n]$ for all sufficiently large $n$. 
\end{proof}

Therefore, $\theta$ is well defined on the set of all topological charges
(equivalence classes of excitations).
In fact, $\theta$ is a quadratic form on the $\FF_p$-vector space $\coker \epsilon$.%
\footnote{
The definition of a quadratic form~\cite{Kniga}
requires that its polar form $S(e,e') = \theta(e+e')-\theta(e)-\theta(e')$ be bilinear over $\FF_p$.
This is true, and in fact one can show that this modular $S$ is precisely
the commutation relation of all logical operators of the finite dimensional stabilizer code
on any sufficiently large but finite 2-torus.
In particular, $S$ is nondegenerate.
One can classify the quadratic forms $\theta$ based only on the fact that it is $\FF_p$-valued
and $S$ is nondegenerate, and the result is in Introduction.
}
If $\theta(e) = 0$, we call $e$ a {\bf boson}.
\begin{lemma}[Cor.~III.20 of \cite{HFH2018}]\label{lem:isotropic}
If $\coker \epsilon \neq 0$, 
then there exists a nontrivial excitation $e \notin \im \epsilon$ such that $\theta(e) = 0$,
i.e., a nontrivial boson exists.
\end{lemma}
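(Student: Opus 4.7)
The plan is to produce a nontrivial element $e \in V := \coker \epsilon$ with $\theta(e) = 0$ by exploiting the $\FF_p$-quadratic form structure on $V$. First I would verify that the polar form $S(e,e') = \theta(e+e') - \theta(e) - \theta(e')$ is nondegenerate on the finite-dimensional $\FF_p$-vector space $V$ (finite-dimensionality being guaranteed by \cref{eq:annmm}). Expanding $\theta(e+e')$ via \cref{def:tspin} rearranges into a symplectic pairing of movers in which all string segments far from the origin cancel, leaving a finite bracket one recognizes as the commutation pairing of logical operators on a sufficiently large periodic lattice. Nondegeneracy of this pairing is the polynomial-framework statement of topological order, cf.\ the exactness condition in \cref{thm:poly} and \cite{Haah2013}.

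Next I would apply the classification of nondegenerate $\FF_p$-quadratic forms summarized in the introduction: $(V,\theta)$ splits orthogonally as a sum of hyperbolic planes and an anisotropic part of $\FF_p$-dimension at most two. Every hyperbolic summand supplies a nontrivial null vector directly; equivalently, the Chevalley--Warning theorem already guarantees a nontrivial zero of any nondegenerate quadratic form in three or more variables over $\FF_p$. So the task reduces to excluding the residual cases in which $V$ itself is entirely anisotropic of dimension one or two; these are precisely the chiral theories enumerated in the introduction, namely the one-dimensional forms $\alpha v^2$ and the two-dimensional three-fermion form (for $p=2$) or norm form of $\FF_{p^2}/\FF_p$ (for odd $p$).

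The hard part is ruling out these small-dimensional anisotropic possibilities; this is where the translation invariance encoded in $R = \FF_p[x^\pm, y^\pm]$ must be used beyond merely giving $V$ finite rank. Following the hint in the introduction about bilinear forms over a one-variable function field, the approach I would pursue is to single out one translation direction, say $x$, and treat the transverse direction as a spectral parameter over $K := \FF_p(y)$. The movers $p_x(e), p_y(e)$ along the selected direction should then assemble into an explicit bilinear form over $K$ whose Witt class in $W(K)$ encodes $\theta$. The Milnor exact sequence for the Witt ring of a rational function field, together with residue/reciprocity identities, should force this class to be hyperbolic, contradicting the assumption that $(V,\theta)$ is anisotropic. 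Setting up the correct bilinear form over $K$ from the movers and checking that its Witt class actually equals that of $\theta$ — in particular that the algebraic construction respects the counterclockwise framing implicit in \cref{def:tspin} — is the technical crux of the argument.
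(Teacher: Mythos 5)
The paper does not prove this lemma; it cites it verbatim as Cor.~III.20 of \cite{HFH2018}, so there is no in-paper proof to compare your argument against. What the paper does say about the cited proof (end of the Introduction) is that it ``relies on bilinear forms over a field of fractions in one variable, for which the translation invariance is used,'' and your plan is directionally consistent with that description.

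That said, your proposal is a sketch, not a proof, and the gap is precisely at the step you label the ``technical crux.'' The first half is fine and standard: the polar form of $\theta$ is nondegenerate, Witt decomposition (or Chevalley--Warning for $\dim \ge 3$) splits off hyperbolic planes, and the only problematic cases are $V$ anisotropic of dimension $1$ or $2$. But eliminating those cases is the entire mathematical content of the lemma, and your plan there is aspirational rather than demonstrated. You write that the movers ``should assemble into an explicit bilinear form over $K = \FF_p(y)$ whose Witt class in $W(K)$ encodes $\theta$'' and that the Milnor exact sequence ``should force this class to be hyperbolic'' --- but you do not construct that form, do not show its Witt class computes the spins, and do not verify that the residues at the finite places and at infinity actually vanish (reciprocity alone does not kill a class; one needs all the local data, and producing those is exactly where the 2D topological-order condition must enter). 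As it stands, the argument asserts the conclusion you need for the small anisotropic cases without deriving it, so the lemma is not established by your proposal. If you want to turn this into a proof, the form over $K$ should come from the restriction of the symplectic structure to operators localized near a 1D boundary (a half-plane or strip), with the topological-order exactness condition supplying the existence of a metabolizer/Lagrangian, and the identification with $\theta$ made via the framing of the movers --- that is the content you would need to extract and check from \cite{HFH2018}, not merely gesture at.
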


\subsection{Simplification of movers and excitation maps}

The rest of the proof of \cref{thm:poly} consists of elementary 
--- a little complicated but not difficult --- computation.
In \cref{lem:isotropy-commutingmover} below
we will choose movers for a nontrivial boson such that they commute with any of its translates.
In \cref{lem:commutingmover=trivialmover}
we will further simplify the movers to determine two columns of $\epsilon$.
Then, using the exactness condition \cref{eq:exact},
we will find an accompanying row of $\epsilon$;
this will rule out the possibility $k := \dim_{\FF_p} \coker \epsilon = 1$.
The determined columns and row will be further simplified in \cref{lem:xyab}
and in turn will single out a direct summand $\epsilon_0$ from $\epsilon$.

\begin{lemma}\label{lem:isotropy-commutingmover}
Under a choice of a sufficiently large unit cell
(reducing the translation group to a subgroup of a finite index via $\cg{m}$ for some $m$),
the movers $p_{x'}(e), p_{y'}(e)$ for a boson $e$ 
can be chosen such that 
$p_{x'}(e)^\dagger \lambda_{m^2q} p_{y'}(e) = p_{x'}(e)^\dagger \lambda_{m^2 q} p_{x'}(e) = p_{y'}(e)^\dagger \lambda_{m^2q} p_{y'}(e) = 0$.
Furthermore, the entries of $p_{x'}(e)$ as a Laurent polynomial vector over $R' = \FF_p[x'^\pm,y'^\pm]$
are $\FF_p$-linear combinations of $1,x'$,
and those of $p_{y'}(e)$ are $\FF_p$-linear combinations of $1,y'$.
\end{lemma}
That is, the movers and all their translates can be made commuting.
\begin{proof}
Let us drop the reference to $e$ since we fix $e$.
Choose arbitrary movers $p_x,p_y$ with respect to $R$; 
$\epsilon(p_x) = (x-1)e$ and $\epsilon(p_y) = (y-1)e$.
If $R' = \FF_p[x'^\pm, y'^\pm]$ injects into $R$ by $\phi^{(m)}: x' \mapsto x^m , y' \mapsto y^m$, then
a mover $p_{x'}$ with respect to $R'$, i.e., $(x'-1)e = (\cg{m}\epsilon)(p_{x'})$,
can be chosen as 
\begin{align}
p_{x'} = s_{tail}+(1+x+\cdots+x^{m-1})p_x + s_{head},
\end{align}
where we did not write $\cg{m}$ for brevity.
The mover $p_{x'}$ (at a coarser lattice) 
is the $m$ movers $p_x$ aligned along the moving direction 
with its ``head'' near $(m,0)\in \ZZ^2$ and ``tail'' near the origin modified
by $s_{head,tail} \in \im \sigma$.
Similarly, 
\begin{align}
p_{y'} = t_{tail} + (1+y+\cdots+y^{m-1})p_y 
\end{align}
with $t_{tail} \in \im \sigma$ near the origin.
(Note that $t_{head}$ is absent.)
We wish to choose these head and tail modifiers such that
\begin{align}
[\overline{x'}p_{x'}, \overline{y'} p_{y'}] &= 0 & \text{ by some } s_{head}, \nonumber \\
[\overline{y'} p_{y'}, - p_{x'}] &= 0 &\text{ by some } s_{tail},\label{eq:choices}\\
[-p_{y'},\overline{x'}p_{x'}] &= 0 &\text{ by some } t_{tail}. \nonumber
\end{align}
These are equations over $\FF_p$ with unknowns $s_{head}, s_{tail}, t_{tail}$ 
and can be solved by inspecting one by one in order.
The first equation contains neither $s_{tail}$ nor $t_{tail}$
as they are far away from the origin 
at which $\overline{x'} p_{x'}$ and $\overline{y'} p_{y'}$ may intersect.
Likewise, the second equation contains neither $s_{head}$ nor $t_{tail}$.
Since a mover must have a nonzero commutator 
with some element of $\im \sigma$ near its head and tail,
we conclude that there exist $s_{head}$ and $s_{tail}$ satisfying the first two equations.
The third equation contains the two unknowns $t_{tail}$ and $s_{head}$,
but given $s_{head}$ determined we can certainly solve it for $t_{tail}$.
For a large enough $m$, the movers $p_{x'}$ and $p_{y'}$ 
are long string operators, and the choices of head and tail modifiers 
are independent of all sufficiently large $m$.

The sequence of \cref{eq:theta} in $n$ constructed by $p_{x'}$ and $p_{y'}$ 
is already constant starting from $n=1$.
Since $e$ is a boson, the fourth claim of \cref{lem:thetawelldefined} implies that
\begin{align}
0 &= [\overline{x'}p_{x'}, \overline{y'} p_{y'}]
 + [\overline{y'} p_{y'}, - p_{x'}]
 + [-p_{x'},\overline{x'}p_{x'}] ,\nonumber\\
0 &= [\overline{x'}p_{x'}, \overline{y'} p_{y'}] 
+ [\overline{y'} p_{y'}, -p_{y'}] 
+ [-p_{y'},\overline{x'}p_{x'}] , \label{eq:bosonimplies}\\
0 &= [\overline{x'}p_{x'},-p_{x'}] + [-p_{x'},-p_{y'}] + [-p_{y'}, \overline{x'}p_{x'}]. \nonumber
\end{align}
By \cref{eq:choices}, all the six commutators (up to a sign) 
that appear in \cref{eq:bosonimplies} are zero.

For a large enough $m$, after a possible redefinition of the unit cell,
the movers $p_{x'}, p_{y'}$ will be supported 
only on two unit cells, one at the origin and another at $(1,0)$ or $(0,1)$.
Thus, there are only six $=\binom{4}{2}$ potentially nontrivial commutators 
among all the translates (with respect to $R'$) of $p_{x'}$ and $p_{y'}$,
but we have made these six commutators to vanish.
\end{proof}

\begin{lemma}\label{lem:goodbasis}
For any nontrivial topological charge $e$,
there exists a free basis for the stabilizer module (the columns of $\sigma$)
such that $e$ is represented by one basis element 
whose all components belong to the ideal $\mm = (x-1,y-1) \subset R$.
\end{lemma}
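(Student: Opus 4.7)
My plan is to construct the required change-of-basis matrix as a constant matrix $M \in \GL(q,\FF_p) \subseteq \GL(q,R)$, reducing the lemma to finite-dimensional linear algebra over $\FF_p$. Using $\mm \subseteq \ann(\coker\epsilon)$ we get $\mm R^q \subseteq \im\epsilon$, so the cokernel $W := \coker\epsilon$ canonically agrees with $\coker\bar\epsilon$, where $\bar\sigma, \bar\epsilon$ denote the reductions of $\sigma, \epsilon$ modulo $\mm$. For $M \in \GL(q,\FF_p)$ one has $M^{\dagger} = M^T$, and it will suffice to find constants $u, v \in \FF_p^q$ with $\bar\sigma u = 0$, with $v$ lifting $e$ to $\FF_p^q$, and with $v^T u = 1$ in $\FF_p$; once found, $u$ and $v$ become the first column of $M$ and of $M^{-\dagger}$ respectively, making the first column of $\sigma M$ lie in $\mm R^{2q}$ and represent the charge $e$.

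The crux is a perfect pairing $\ker\bar\sigma \times W \to \FF_p$ induced from the standard pairing on $\FF_p^q$. From $\epsilon = \sigma^\dagger \lambda_q$ and $\lambda_q^T = -\lambda_q$ I get $\bar\epsilon^T = -\bar\lambda_q \bar\sigma$, so $\ker\bar\epsilon^T = \ker\bar\sigma$; this says exactly that $\ker\bar\sigma = (\im\bar\epsilon)^{\perp}$ in $\FF_p^q$. Double orthogonal then gives $(\ker\bar\sigma)^{\perp} = \im\bar\epsilon$, so the standard pairing descends to a nondegenerate pairing between $\ker\bar\sigma$ and $W = \FF_p^q / \im\bar\epsilon$. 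Nontriviality of $e \in W$ therefore produces $u \in \ker\bar\sigma$ with $\tilde e^T u \ne 0$ for any lift $\tilde e \in \FF_p^q$ of $e$; after rescaling $u$ I set $v := \tilde e$, so that $v^T u = 1$.

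To build $M$ explicitly, I extend $\{u\}$ to any $\FF_p$-basis $\{u, m_2, \dots, m_q\}$ of $\FF_p^q$ and replace each $m_i$ by $m_i' := m_i - (v^T m_i) u$, so that $v^T m_i' = 0$ while $v^T u = 1$. The resulting $M := [\,u \mid m_2' \mid \cdots \mid m_q'\,] \in \GL(q,\FF_p)$ has first column $u$ and, by uniqueness of matrix inverses, the first row of $M^{-1}$ equals $v^T$; hence the first column of $M^{-\dagger} = M^{-T}$ is $v$. The first column of $\sigma M$ is then $\sigma u$, whose entries all lie in $\mm$ because $\bar\sigma u = 0$; and since $v$ is a lift of $e$ in $R^q$ (using $\mm R^q \subseteq \im\epsilon$ once more), in the new basis the first standard basis excitation represents the charge $e$.

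The main obstacle is establishing nondegeneracy of the pairing between $\ker\bar\sigma$ and $W$, which rests decisively on the symplectic self-duality $\epsilon = \sigma^\dagger \lambda_q$; without this relation there would be no canonical identification of $\ker\bar\sigma$ with $(\im\bar\epsilon)^{\perp}$ in $\FF_p^q$, and the argument would collapse.
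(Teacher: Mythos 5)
Your proof is correct and reaches the same endpoint the paper does—an $\FF_p$-constant change of basis $M \in \GL(q;\FF_p)$ making $e$ the class of a standard basis vector whose corresponding column of $\sigma$ lies in $\mm R^{2q}$—but the route is packaged differently. The paper applies row operations in $\GL(q;\FF_p)$ to put $\epsilon|_{x=1,y=1}$ into reduced row echelon form; the $k$ zero rows then automatically index generators of $\sigma$ with entries in $\mm$, the cokernel is supported on those coordinates, and a further $\GL(k;\FF_p)$ acting on them brings the representative of $e$ to a unit vector. You instead name the structure explicitly: from $\bar\epsilon^T = -\lambda_q\bar\sigma$ over $R/\mm = \FF_p$ you deduce $\ker\bar\sigma = (\im\bar\epsilon)^\perp$, hence a perfect pairing between $\ker\bar\sigma$ and $\coker\bar\epsilon$, which supplies a $u\in\ker\bar\sigma$ dual to $e$ and lets you assemble $M$ by orthogonalizing the rest of a basis against the lift $v$ of $e$. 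The two are morally the same reduction to constant linear algebra—the paper's RREF implicitly realizes your pairing, since after row reduction its zero rows are exactly a basis for $\ker\bar\sigma$—but your version makes the duality the visible engine of the argument and avoids the echelon-form bookkeeping, at the cost of a slightly more elaborate hand-built $M$. One small wording caution: in your plan statement ``the first column of $\sigma M$ \ldots represent[s] the charge $e$'' conflates two facts (that $\sigma u$ has entries in $\mm$, and that $[e_1]=e$ in $\coker(M^\dagger\epsilon)$); you do separate them correctly in the final paragraph.
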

\begin{proof}
Let $\sigma$ be chosen such that $\epsilon = \sigma^\dagger \lambda_q$ satisfies \cref{eq:annmm}.
This is possible by \cite[\S7 Thm.~4]{Haah2013}.
Apply row operations $\GL(q;\FF_p)$ to $\epsilon$ so that $\epsilon|_{x=1,y=1}$ 
is in the reduced row echelon form.
The set of all topological charges is $\coker \epsilon$ which is in fact a vector space over $\FF_p$
on which the translation group has trivial action.
The dimension $k = \dim_{\FF_p} \coker \epsilon = \dim_{\FF_p} \coker \epsilon|_{x=1,y=1}$
is precisely the number of all zero rows of $\epsilon|_{x=1,y=1}$.
Hence, any nonzero element of the codomain of $\epsilon$ that is supported on these last $k$ components
represents a nontrivial topological charge.
Therefore, for any nontrivial charge $e$, 
its representative vector in $\coker \epsilon$ can be mapped
to a unit vector by some $\GL(k;\FF_p)$ acting on the last $k$ components.
\end{proof}

\begin{lemma}\label{lem:commutingmover=trivialmover}
Under a choice of a sufficiently large unit cell,
the movers $p_x,p_y$ for a nontrivial boson $e$ can be mapped to
\begin{align}
p_x &= \begin{pmatrix} 1 & 0 & 0 & 0 &\cdots & 0\end{pmatrix}^T, \nonumber\\
p_y &= \begin{pmatrix} 0 & 1 & 0 & 0 & \cdots & 0 \end{pmatrix}^T
\end{align}
by some elementary symplectic transformation $\ESp(q;R)$ (Clifford circuit).
\end{lemma}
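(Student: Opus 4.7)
The plan is to reduce $p_x$ and $p_y$ to the target form by elementary symplectic operations after further coarse-graining, exploiting the fact that in a sufficiently fine coarsening the movers look like ``almost constant'' vectors.

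First, I apply $\cg{m}$ for $m$ large enough that the supports of $p_x$ and $p_y$ each fit within two adjacent unit cells of the coarsened lattice. Setting $X = x^m$, $Y = y^m$ and working over the new translation ring $R' = \FF_p[X^{\pm}, Y^{\pm}]$, the movers can be written in the coarsened coordinates as $p_x = A + X B$ and $p_y = C + Y D$ where $A, B, C, D$ are constant vectors in $\FF_p^{2q'}$ with $q' = m^2 q$. Expanding the commutation relations from \cref{lem:isotropy-commutingmover} and collecting coefficients of monomials in $X, Y$, one finds that $\{A, B, C, D\}$ spans an isotropic subspace of $(\FF_p^{2q'}, \lambda_{q'})$.

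Second, I extract a unimodularity-type constraint on $A$ from the image equation $\epsilon'(p_x) = (X - 1) e$. Since $e$ from \cref{lem:goodbasis} is a unit vector, the $X^0 Y^0$-coefficient of $\epsilon'(p_x)$ equals $-e$, which forces the row of $\epsilon'$ at the position of $e$ to dot with $A$ to give $-1$; in particular $A \neq 0$. As $\ESp(q'; \FF_p) = \Sp(q'; \FF_p)$ over a field, I use this $\FF_p$-subgroup inside $\ESp(q'; R')$ to send $A$ to $e_1$. With $A = e_1$, the isotropy $A^T \lambda_{q'} B = 0$ forces $B_{q'+1} = 0$, and the remaining components of $B$ can be cancelled by paired operations such as $E_{i,1}(-X B_i) E_{1+q', i+q'}(\bar X B_i)$; because $A = e_1$ has only one nonzero coordinate, the back-reactions from the $\bar X$-partners involve only $B$ itself and can be arranged to cancel, yielding $p_x = e_1$.

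Third, with $p_x = e_1$ now fixed, the commutation $e_1^\dagger \lambda_{q'} p_y = 0$ forces $(p_y)_{q'+1} = 0$. The stabilizer of $e_1$ inside $\ESp(q'; R')$ still contains a rich subgroup --- essentially $\ESp(q'-1; R')$ acting on the coordinates other than $1$ and $q'+1$, together with safe paired additions that do not touch the first coordinate nor come from the $(q'+1)$-st --- and repeating the second step on $p_y$, now using $\epsilon'(p_y) = (Y - 1) e$ and the $Y$-analogue of the same cancellations, sends $p_y$ to $e_2$.

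The main obstacle is the back-reaction cancellation in the second step: because $\ESp$ operations come in paired rows, cancelling $X B_i$ in the $i$-th component of $p_x$ also perturbs the $(q'+1)$-th component via the $\bar X$-partner. Ensuring that no residual $X^{\pm}$- or $Y^{\pm}$-dependent terms remain in any other coordinate demands careful bookkeeping, and in the worst case an additional round of coarse-graining to provide room to absorb the residual constants into a redefinition of the unit cell. A secondary subtlety is that the unimodularity argument in the second step depends essentially on the specific unit-vector form of $e$ granted by \cref{lem:goodbasis}, so the order of the lemmas (choose $e$ nicely, then commute the movers, then simplify) is important.
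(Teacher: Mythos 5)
Your approach diverges from the paper's at the crucial step, and I think the divergence opens a genuine gap.

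The paper first reduces $p_x$, viewed as a vector over the PID $\FF_p[x^\pm]$ satisfying $p_x^\dagger \lambda_q p_x = 0$, to a vector with a single nonzero component $g \in \FF_p[x^\pm]$, using the symplectic normal-form computation cited from \cite[\S 6]{Haah2013}. The heart of the lemma is then an \emph{irreducibility} argument: because of \cref{lem:goodbasis}, one row of $\epsilon$ has entries in $\mm$ and dots with $p_x$ to give $x-1$, so $fg = x-1$ with $f \in \mm$; since $x-1$ is irreducible and elements of $\mm$ cannot be units, $g$ must be a monomial. This same irreducibility trick is invoked a second time to dispose of the bad case for $p_y$. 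Your proposal contains neither the symplectic PID reduction nor the irreducibility argument, and the constant-symplectic-plus-cleanup route you substitute does not close the same gap.

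Concretely, two things go wrong. First, writing $p_x = A + XB$, after you move $A$ to $e_1$ there is no reason for $B_1$ to vanish; if $B_1 \ne 0$ the first component of $p_x$ is $1 + X B_1$, and no elementary symplectic operation can remove the $X B_1$ term while fixing $e_1$, because all other components are either zero or proportional to $X$, so there is nothing of degree zero to add from. The paper's irreducibility step is precisely what forecloses this case (the single nonzero entry must be a \emph{monomial}, hence normalizable to $1$), and you have no substitute for it. Second, the claimed paired cancellations $E_{i,1}(-XB_i)E_{1+q',i+q'}(\bar X B_i)$ do not leave component $q'+1$ untouched: the partner $E_{1+q',i+q'}(\bar X B_i)$ adds $\bar X B_i \cdot X B_{i+q'} = B_i B_{i+q'}$ to component $q'+1$, which is generically nonzero; the isotropy $A^\dagger\lambda B=0$ only gives $B_{q'+1}=0$ and says nothing about $B_i B_{i+q'}$. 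Also, $E_{i,1}(-XB_i)$ acting on the first component $1 + X B_1$ produces an unwanted $X^2$ term when $B_1 \ne 0$. Finally, your opening claim that $A \ne 0$ is not forced by the image equation alone (e.g.\ $p_x = X e_1$ with $\epsilon(e_1) = (1-\bar X)e$ has $A=0$); the paper instead allows the single nonzero entry to be any monomial and then normalizes it by a monomial change of stabilizer basis, a step your write-up omits. In short, the proposal is missing the paper's central lemma-internal idea (irreducibility of $x-1$ against \cref{lem:goodbasis}) and the cleanup it proposes in its place does not go through.
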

\begin{proof}
By \cref{lem:isotropy-commutingmover},
we may assume that $p_x$ and $p_y$ are supported on at most two adjacent unit cells.
In particular, $p_x$ is a Laurent polynomial vector over $\FF_p[x^\pm]$,
satisfying $p_x^\dagger \lambda_q p_x = 0$.
Since~$\FF_p[x^\pm]$ is a principal ideal domain,
we can find an elementary symplectic transformation $E_1 \in \ESp(q;\FF_p[x^\pm])$ 
that turns~$p_x$ into a vector with a single nonzero component, say, $g$ at the first entry;
see computation in \cite[\S 6]{Haah2013}.
But the single component $g$ must be a monomial;
under the choice of $\epsilon$ of \cref{lem:goodbasis} 
we have $fg=x-1$ for some $f \in \mm = (x-1,y-1)$,
and $x-1$ being irreducible in a unique factorization domain $R$
forces $g$ to be a unit.

After the transformation $E_1$,
the $y$-mover becomes $E_1 p_y = h_1 y + h_0$ for some $h_1,h_0 \in \FF_p[x^\pm]^{2q}$.
To remove the appearance of $x$ in this $y$-mover,
we further take a coarse-graining $\cgx{m}$ induced by a ring injection 
$\chi^{(m)} : x' \mapsto x^m, y' \mapsto y$.
That is, $\cgx{m}$ is a coarse-graining along $x$-direction only.

Assuming $m$ is even, 
let us fix a specific basis of $R \cong R'^{m}$ as an $R'$-module:
$R = x^{-m/2}R' \oplus \cdots \oplus x^{m/2-1}R'$.
This choice is convenient because any Laurent polynomial vector $v \in R^{2q}$
becomes, by $\cgx{m}$, a vector with entries in $\FF_p[y]$ without $x$
for all sufficiently large $m$.

Once $x$-mover is put in the promised form,
any further coarse-graining $\cgx{m}$ does not complicate the $x$-mover;
if $p_x$ is supported on a single cell, i.e., $p_x$ involves no variable,
the operator $p_{x'} = \cg{m}((x^{-m/2}+x^{-m/2+1}+\cdots+ x^{m/2-1})p_x)$ 
is still supported on one new unit cell (no variables),
and $\ESp(mq;\FF_p)$, i.e., some Clifford transformation within a unit cell,
can bring $p_{x'}$ into the promised form.
This transformation of $\ESp(mq;\FF_p)$ is innocuous in that 
it does not insert or remove any variable to the $x$- and $y$-movers.
A benefit of $\cgx{m}$ is now that 
$p_{y'} = \cgx{m}(E_1 p_y)$ 
involves no $x'$ and furthermore $p_{y'} = t_1 y + t_0$ for some $t_1,t_0 \in \FF_p^{2mq}$.

Therefore, we may assume that $p_x$ is in the promised form 
with the sole nonzero entry $1$ in the first component,
and $p_y$ is a polynomial in $y$ with all exponents being $0$ or $1$ not involving $x$.
Since $p_x^\dagger \lambda_q p_y = 0$, the $q+1$-st component of $p_y$ must be zero.
This forces $q \ge 2$.
Now, we look at the $2,3,\ldots,q,q+2,q+3,\ldots,2q$-th components of $p_y$.
If they generate the unit ideal ($=\FF_p[y^\pm]$),
then clearly $p_y$ can be turned into the promised form.
If not, then by the exponent restriction we can use some transformation of $\ESp(q-1;\FF_p)$
to turn $p_y$ into one that has $y-v$ at the second component with $v \in \FF_p\setminus\{0\}$,
some $u \in \FF_p$ at the first component,
and zeros in all the other components.
Then, under the choice of $\epsilon$ of \cref{lem:goodbasis}
we would have $(x-1)u + f (y-v) = y-1$ for some $f \in \mm$,
but $ux - u - y+1 = -f(y-v)$ is an irreducible polynomial,
a contradiction.
\end{proof}

\begin{proof}[Proof of \cref{thm:poly}]
Let $k$ be the $\FF_p$-dimension of the torsion submodule of $\coker \epsilon$.
If $k=0$, then by \cite[Lem.~7.1]{Haah2013} $\sigma$ can be chosen to be kernel free,
and $\coker \epsilon$ must be pure torsion, which implies $\coker \epsilon = 0$,
so the first Fitting ideal of $\epsilon$ is unit, and \cite[Cor.~4.2]{Haah2013} says
that the code (ground space) on any finite periodic lattice encodes zero logical qudit.
Then, \cite[Thm.~IV.4]{HFH2018} provides a Clifford QCA that maps the stabilizer group to the trivial one.
The result of \cite{haah2019clifford} implies that this Clifford QCA is actually a Clifford circuit
that is translation invariant with respect to a translation subgroup of a finite index.

If $k \neq 0$,
as remarked earlier, we can assume \cref{eq:annmm}.
By \cref{lem:isotropic} we have a nontrivial boson, and its movers can be chosen as in \cref{lem:commutingmover=trivialmover} by reducing the translation invariance to a subgroup of a finite index.
Since these simplified movers have only one nonzero entry of $1$,
under the choice of stabilizer generators in \cref{lem:goodbasis}
we must have 
\begin{align}
\epsilon = \begin{pmatrix}
x - 1 & y-1 &  \star & \star & \cdots & \star \\
0     & 0   &  \star & \star & \cdots & \star \\
\vdots&\vdots& \vdots & \vdots & \cdots & \vdots \\
0     & 0   &  \star & \star & \cdots & \star 
\end{pmatrix} \label{eq:eps1}
\end{align}
where $\star$ indicates an unknown entry, but the first row has entries from $\mm = (x-1,y-1)$.
Since $\begin{pmatrix}
y-1 & -x+1 & 0 & \cdots & 0
\end{pmatrix}^T$
is in $\ker \epsilon$, 
the condition $\ker \epsilon = \im \lambda_q \epsilon^\dagger$ implies that
the rows of $\epsilon$ must generate
\begin{align}
\begin{pmatrix}
0 & \cdots & 0 & \bar y - 1 & -\bar x + 1 & 0 &\cdots & 0
\end{pmatrix} \label{eq:addlrow}
\end{align}
over $R$ where $\bar y -1$ is at the $q+1$-th position from the left.
But the $R$-linear combination that results in this row vector
cannot contain a nonzero summand of the first row in \cref{eq:eps1} 
because the base ring $R$ does not have any zero divisor.
This makes it impossible for $k$ to be $1$
since $k$ is the number of rows of $\epsilon$ that becomes zero by setting $x=1=y$;
see the proof of \cref{lem:goodbasis}.

Let us extend $\epsilon$ to include a row \cref{eq:addlrow} to make $\epsilon'$;
this amounts to increasing the number of generators for the stabilizer module by 1.
(The new $\epsilon'$ does not satisfy \cref{eq:annmm} since $\coker \epsilon'$ is not torsion.)
\begin{align}
\epsilon' = \begin{pmatrix}
x - 1 & y-1 &  \star & \cdots & \star & u' & v' & \star &\cdots & \star \\
0 & 0 &        0 &\cdots & 0& \bar y - 1 & -\bar x + 1 & 0 &\cdots & 0\\
0     & 0   &  \star & \cdots &\star & \star & \star & \star & \cdots & \star\\
\vdots&\vdots& \vdots & \cdots &\vdots & \vdots & \vdots & \vdots & \cdots& \vdots\\
0     & 0   &  \star & \cdots &\star & \star & \star & \star & \cdots & \star
\end{pmatrix} \label{eq:eps2}
\end{align}
where $u',v'$ are unknown.
Since the first row of $\epsilon'$ has entries in $\mm$,
the two generators of $\mm$ in the top left can eliminate all $\star$ in the first row
by control-Not and Hadamard gates which act on the right of $\epsilon'$.
The second row also remains intact. Thus we obtain
\begin{align}
\epsilon'' = \begin{pmatrix}
x - 1 & y-1 &  0 & \cdots & 0 & u'' & v'' & 0 &\cdots & 0 \\
0 & 0 &        0 &\cdots & 0& \bar y - 1 & -\bar x + 1 & 0 &\cdots & 0\\
0     & 0   &  \star & \cdots &\star & \star & \star & \star & \cdots & \star\\
\vdots&\vdots& \vdots & \cdots &\vdots & \vdots & \vdots & \vdots & \cdots& \vdots\\
0     & 0   &  \star & \cdots &\star & \star & \star & \star & \cdots & \star
\end{pmatrix}. \label{eq:eps3}
\end{align}
Here, each pair of entries in the columns of $u'',v''$, below the second row, must be a $R$-multiple of
$\begin{pmatrix}
\bar y - 1 & -\bar x +1
\end{pmatrix}$
because $\epsilon''^\dagger \lambda_q \epsilon'' = 0$;
the symplectic product with the first row enforces this.
Hence, they can be eliminated by $\GL(q+1;R)$ on the left of $\epsilon'$.
Now we use the following elementary fact which will be proved shortly.
\begin{lemma}\label{lem:xyab}
For $u,v \in R$, if a matrix 
\begin{align}
M = 
\begin{pmatrix}
x-1 & y-1 & u & v \\
0   & 0   & \bar y -1 & -\bar x + 1
\end{pmatrix}
\end{align}
satisfies $M \lambda_2 M^\dagger = 0$,
then 
there exist $A \in \GL(2;R)$ and $B \in \ESp(2;R)$
such that $A M B = \epsilon_0$ where $\epsilon_0$ is defined in \cref{eq:eps0}.
\end{lemma}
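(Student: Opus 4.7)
The plan is to use the Koszul structure of the regular sequence $(x-1, y-1)$ in $R$ to construct explicit row and column operations that eliminate $u$ and $v$. First I would unpack $\epsilon\lambda_2\epsilon^\dagger = 0$: since the second row of $\epsilon$ vanishes on its first two entries, two of the three scalar conditions are automatic, and the only nontrivial equation simplifies (using $\bar x - 1 = -\bar x(x-1)$) to
\begin{align*}
(x-1)(\bar u + \bar x u) + (y-1)(\bar v + \bar y v) = 0.
\end{align*}
The Koszul syzygy of $(x-1, y-1)$ then produces $r \in R$ with $\bar u + \bar x u = -(y-1)r$ and $\bar v + \bar y v = (x-1)r$; conjugating both, $\bar r = -xy\,r$.

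Next I would solve $\bar g - g\bar x\bar y = r$ for some $g \in R$, a Laurent-coefficient equation that is solvable because the involution $(i,j) \mapsto (1-i, 1-j)$ on $\ZZ^2$ is fixed-point free while the symmetry $\bar r = -xy\,r$ is precisely the compatibility on each two-element orbit. Applying the row operation $A = \begin{pmatrix} 1 & g \\ 0 & 1 \end{pmatrix} \in \GL(2; R)$ replaces $(u, v)$ by $(\tilde u, \tilde v) := (u + g(\bar y - 1),\, v - g(\bar x - 1))$, which by direct substitution satisfies $\bar{\tilde u} + \bar x \tilde u = 0$ and $\bar{\tilde v} + \bar y \tilde v = 0$. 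I would then isolate three families of moves in $\ESp(2; R)$ whose net effect touches only the pair $(\tilde u, \tilde v)$: the control-Phase with $i = j = 1$ adds $f_1(x-1)$ to $\tilde u$ for any self-conjugate $f_1$; the control-Phase with $i = j = 2$ adds $f_2(y-1)$ to $\tilde v$ for any self-conjugate $f_2$; and the conjugate of the control-Not (of \cref{def:ESp} with $i=2$, $j=1$, parameter $h$) by the Hadamard on the first qudit adds the pair $(h(y-1),\, \bar h(x-1))$ to $(\tilde u, \tilde v)$, as a short direct calculation confirms. Combining, it suffices to find $f_1, f_2 \in R$ with $f_i = \bar{f_i}$ and $h \in R$ satisfying
\begin{align*}
\tilde u + f_1(x-1) + h(y-1) = 0, \qquad \tilde v + \bar h(x-1) + f_2(y-1) = 0.
\end{align*}

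Finally I would take $f_1$ to be the $y$-independent Laurent polynomial $-u(x, 1)/(x-1)$ and $f_2 := -v(1, y)/(y-1)$; these are well-defined since specializing the Step-1 identity at $y = 1$ and $x = 1$ yields $(x-1)\mid u(x,1)$ and $(y-1)\mid v(1,y)$, and they are self-conjugate thanks to the accompanying symmetries $u(x, 1) = -x\,u(x^{-1}, 1)$ and $v(1, y) = -y\,v(1, y^{-1})$ extracted from the same specializations. The first equation then uniquely determines $h$. The hardest step will be verifying that $\bar h$, the conjugate of this $h$, also satisfies the second equation; after substituting the definitions of $u^{(1)} := [u - u(x, 1)]/(y-1)$ and $v^{(1)} := [v - v(1, y)]/(x-1)$, this consistency reduces to the identity $\bar{u^{(1)}} - v^{(1)} = g\bar x + \bar g y$, which follows in turn from $\bar g - g\bar x\bar y = r$ together with the original commutation relation of Step 1.
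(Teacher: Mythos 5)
Your strategy is sound up through Step 3: deriving $(x-1)(\bar u + \bar x u) + (y-1)(\bar v + \bar y v) = 0$, invoking the Koszul syzygy for $(x-1,y-1)$ to get $r$ with $\bar r = -xy\,r$, solving $\bar g - g\bar x\bar y = r$, and verifying that the row operation by $g$ produces $\tilde u,\tilde v$ with $\bar{\tilde u} + \bar x \tilde u = 0 = \bar{\tilde v} + \bar y\tilde v$. The catalog of $\ESp$ moves touching only $(\tilde u,\tilde v)$ is also correctly identified. The problem is the final consistency claim: the identity $\bar{u^{(1)}} - v^{(1)} = g\bar x + \bar g y$ simply does \emph{not} follow from $\bar g - g\bar x\bar y = r$ and the commutation relation, and in fact it is \emph{false} in general.

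Here is a concrete counterexample with $p = 2$. Take $u = 0$ and $v = -(x-1)(1 + \bar x y) = -x + 1 - y + \bar x y$. One checks directly that $\epsilon\lambda_2\epsilon^\dagger = 0$ (indeed $\bar u + \bar x u = 0$ and $\bar v + \bar y v = 0$). The Koszul argument gives $r = 0$, so the constraint on $g$ is $\bar g = g\bar x\bar y$. Meanwhile $u^{(1)} = 0$ and $v^{(1)} = -(1+\bar x y)$, so $W := \bar{u^{(1)}} - v^{(1)} = 1 + \bar x y \neq 0$. But the right-hand side $g\bar x + \bar g y = g\bar x + g\bar x\bar y\cdot y = 2g\bar x = 0$ in $\FF_2$ for \emph{every} admissible $g$. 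So the required identity fails and no choice of $g$, $f_1$, $f_2$, $h$ within your scheme can make both equations hold. (For odd $p$ the identity also fails for a generic $g$, since replacing $g$ by $g + g_0$ with $\bar{g_0} = g_0\bar x\bar y$ changes $g\bar x + \bar g y$ by $2\bar x g_0$ while the left-hand side is unchanged; there the problem is fixable by a judicious $g$, but no such fix exists when $p=2$.)

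The underlying gap is that the set of corrections your three $\ESp$ families can add to $(\tilde u,\tilde v)$ is $\{(f_1(x-1)+h(y-1),\ \bar h(x-1)+f_2(y-1))\}$, and this set does not fill out what is needed. The paper's proof supplies the missing ingredient: when applying the cross-qudit control-Phase it \emph{simultaneously} performs an extra $\GL$ row addition, which amounts to a further free parameter $g'\in R$ (unconstrained, unlike your $g$) modifying $(\tilde u,\tilde v)$ by $(g'(\bar y-1),\ -g'(\bar x -1))$. That enlarges the reachable set to $\{(f_1(x-1)+h'(y-1),\ h''(x-1)+f_2(y-1)) : h''-\bar{h'} = \bar x g' + y\bar{g'}\}$, and one can verify the map $g'\mapsto\bar x g'+y\bar{g'}$ \emph{is} surjective onto the relevant symmetry class (e.g.\ $g'=x$ hits $1+\bar x y$ above). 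So your outline can likely be repaired by inserting a second row operation into the final system, but as written the ``hardest step'' is not a verification you postponed — it is an identity that is false.
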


Then we obtain
\begin{align}
\epsilon''' = \begin{pmatrix}
x - 1 & y-1 &  0 & \cdots & 0 & 0 & 0 & 0 &\cdots & 0 \\
0 & 0 &        0 &\cdots & 0& \bar y - 1 & -\bar x + 1 & 0 &\cdots & 0\\
0     & 0   &  \star & \cdots &\star & 0 & 0 & \star & \cdots & \star\\
\vdots&\vdots& \vdots & \cdots &\vdots & \vdots & \vdots & \vdots & \cdots& \vdots\\
0     & 0   &  \star & \cdots &\star & 0 & 0 & \star & \cdots & \star
\end{pmatrix}. \label{eq:eps4}
\end{align}
It is evident that $\epsilon_0$ is a direct summand.
The change $\epsilon \to \epsilon' \to \epsilon'' \to \epsilon'''$
is achieved by $\ESp(q;R)$ and change of a generating set of the stabilizer module.
The complement of $\epsilon_0$ in $\epsilon'''$
gives a stabilizer module that satisfies the conditions of our theorem.
This proves the induction step decreasing $k$ by 2,
and thus completes the proof of the theorem.
\end{proof}

\begin{proof}[Proof of \cref{lem:xyab}]
By long division we can write $u = (\bar y - 1)u' + u''$ where $u' \in R, u'' \in \FF_p[x^\pm]$.
By a row operation on the left of $M$, we can eliminate $(\bar y -1)u'$
and thus we assume from now on that $u = u'' \in \FF_p[x^\pm]$.
The equation $M \lambda_2 M^\dagger =0$ implies that
$
(x-1) \bar u + (y-1) \bar v = (\bar x -1) u + (\bar y -1) v
$, which can be rearranged as
\[
(x-1)(\bar u + \bar x u) = -(y-1) (\bar v + \bar y v).
\]
The left hand side is a Laurent polynomial in $x$,
and therefore, if nonzero, it is not divisible by $y-1$.
Hence, $\bar u + \bar x u = 0 = \bar v + \bar y v $.
Let $u = \sum_j u_j x^j$ where $u_j \in \FF_p$ be the expansion of $u$.
It follows that $u_{-j} + u_{j+1} = 0$ for all $j$.
This implies that $u|_{x=1} = 0$ so $u = (x-1)h$ for some $h \in \FF_p[x^\pm]$.
Substituting, we have $ \bar u + \bar x u = (\bar x -1 )\bar h - (\bar x -1) h = 0$ or $h = \bar h$.
Thus the control-Phase gate on the first qudit can eliminate $u$.

We are left with an equation $\bar v + \bar y v = 0$.
Write $v = (x-1)s + f$ where $f \in \FF_p[y^\pm]$.
Then, we have $(x-1)(-\bar x \bar s + \bar y s) + \bar f + \bar y f = 0$.
Since $f$ is constant in $x$, we have $f + \bar y f = 0$ 
implying, as before, that $f = (y-1)g$ for some $g = \bar g \in \FF_p[y^\pm]$,
and hence $f$ can be eliminated by the control-Phase on the second qudit.

The remaining term $(x-1)s$ of $v$ satisfies $\bar x y \bar s = s$.
We claim that $s = y \bar r + \bar x r$ for some $r \in R$,
which will conclude the proof:
\begin{align}
\begin{pmatrix}
1 & -r \\ 0 & 1
\end{pmatrix}
\begin{pmatrix}
x-1 & y-1 & 0 & (x-1)s \\
0   & 0   & \bar y -1 & -\bar x + 1
\end{pmatrix}
\begin{pmatrix}
1 & 0 & 0 & -y \bar r\\
0 & 1 & -\bar y r & 0\\
0 & 0 & 1 & 0 \\
0 & 0 & 0 & 1
\end{pmatrix}
=
\begin{pmatrix}
x-1 & y-1 & 0 & 0 \\
0   & 0   & \bar y -1 & -\bar x + 1
\end{pmatrix}.
\end{align}
To show the claim,
let $z = \bar x y$.
Write $s = \sum_j y^j s_j$ where $s_j \in \FF_p[z^\pm]$;
such an expression is unique as seen by considering a ring isomorphism 
$\FF_p[z^\pm,y^\pm] \to \FF_p[x^\pm,y^\pm]$
where $z \mapsto \bar x y, y \mapsto y$.
We have $s_{-j} = z \bar s_j$ for all $j$ and hence $s_0 = (z+1)\ell$ 
for some $\ell = \bar \ell \in \FF_p[z^\pm]$.
Let $s_+ = \sum_{j > 0} y^j s_j$.
Then $s = s_+  + (z+1)\ell + z \overline{ s_+} = y \overline{(x \ell + x s_+)} + \bar x (x \ell + x s_+)$.
\end{proof}

\begin{acknowledgments}
I thank Lukasz Fidkowski and Matt Hastings for fruitful discussions 
and Zhenghan Wang for informing me of \cite{Galindo2016}.
I also thank an anonymous referee for comments 
that helped improving the clarity of the manuscript.
\end{acknowledgments}

\bibliography{cph2d-ref}
\end{document}